\newtheorem{theorem}{{Theorem}}
\newtheorem{lemma}{{Lemma}}
\newtheorem{proposition}{{Proposition}}
\theoremstyle{remark}
\newtheorem*{remark}{Remark}
\theoremstyle{conjecture}
\newtheorem*{conjecture}{{Conjecture}}
\theoremstyle{definition}
\newtheorem{example}{Example}
\newcommand{\bs}{\mathbf{s}}
\newcommand{\bws}{\widetilde{\mathbf{s}}}
\newcommand{\ws}{\widetilde{s}}
\newcommand{\F}{\mathbb{F}}
\newcommand{\GF}{\mathrm{GF}}
\newcommand{\LC}{\mathrm{LC}}
\newcommand{\Z}{\mathbb{Z}}
\newcommand{\bH}{\mathbf{H}}
\newcommand{\tA}{\widetilde{A}}
\begin{document}
\title[Linear complexity of generalized cyclotomic sequences]{Linear complexity of generalized cyclotomic sequences of period $2p^{m}$}

\author{Yi Ouyang and Xianhong Xie}
\address{Wu Wen-Tsun Key Laboratory of Mathematics,  School of Mathematical Sciences, University of Science and Technology of China, Hefei, Anhui 230026, China}

\email{yiouyang@ustc.edu.cn}
\email{xianhxie@mail.ustc.edu.cn}
\thanks{Partially supported by NSFC No. 11571328}
\subjclass[2010]{11B50, 94A55, 94A60}

\begin{abstract} In this paper, we construct two generalized cyclotomic binary sequences of period $2p^{m}$ based on the generalized cyclotomy and compute their linear complexity, showing that they are of high linear complexity when $m\geq 2$.

\smallskip

\noindent\textbf{Keywords}. Binary sequence, Linear complexity, Cyclotomy, Generalized cyclotomic sequence.
\end{abstract}
\maketitle

\section{Introduction}
 A sequence $\textbf{s}^{\infty}=\{s_{0},s_{1},s_{2},\ldots\}$ is called a binary sequence of period $N$ if $s_i\in \F_2$ and $s_{i}=s_{i+N}$ for all $i\geq0$. The linear complexity(LC) of a periodic binary sequence $\textbf{s}^{\infty}$,  denoted by $\LC(\textbf{s}^{\infty})$, is the length of shortest linear feedback shift register(LFSR) that generates the sequence (\cite{1}), i.e., the smallest positive integer $l$ such that $s_{i}= c_{l}s_{i-l}+\cdots+c_{2}s_{i-2}+c_{1}s_{i-1}$ for $i\geq l$ and constants $c_{0}=1,c_{1},\ldots,c_{l}\in \F_2$. For $\bs^\infty$  a sequence of period $N$, the
characteristic power series/polynomial of $\textbf{s}^{\infty}$ and $\textbf{s}^{N}=\{s_{0},s_{1},\ldots,s_{N-1}\}$ are defined respectively as $c^{\infty}(x)=s_{0}+s_{1}x +\cdots$ and $c^{N}(x)=s_{0}+s_{1}x+\cdots+s_{N-1}x^{N-1}$, the minimal polynomial (\cite{2})
of $\textbf{s}^\infty$ is
 \[ m(x)=(x^{N}-1)/\gcd(c^{N}(x),x^{N}-1). \]
Then we have the following classical relation
\begin{equation}  \label{eq:LC}
  \LC(\textbf{s}^{\infty})=\deg(m(x))=N-\deg(\gcd(x^{N}-1,c^{N}(x))).
\end{equation}

The linear complexity of a  sequence  is an important criteria of its quality. As we all know, sequences with high linear complexity (such that $\LC(\textbf{s}^{\infty})>\frac{N}{2}$) have important applications in cryptography.

Cyclotomic generators based on cyclotomy can generate sequences with large linear complexity.
Generalized cyclotomic classes with respect to $pq$ and $p^{2}$ were introduced by Whiteman and Ding for the purposes of searching for residue difference sets (\cite{3}) and cryptography (\cite{4}) respectively. Based on Whiteman's generalized cyclotomy of order $2$, Ding (\cite{19}) constructed a class of generalized cyclotomic sequences of period $pq$ and determined their linear complexity. Autocorrelation and linear complexity of period $p^{2}$ and $p^{3}$ were studied in \cite{14,15}. The linear complexity of generalized cyclotomic sequences of period $p^{m}$ were investigated in \cite{16,17}. In addition, the generalized cyclotomy of order $2$ was extended to the case of period $p_{1}^{e_{1}}\cdots p_{m}^{e_{m}}$, which is not consistent with the classical cyclotomy (\cite{5}). Subsequently, new generalized cyclotomic sequences of period $p_{1}^{e_{1}}\cdots p_{m}^{e_{m}}$ that include the classical ones as special cases were presented in \cite{6}, and the linear complexity of such sequences of period $pq$ were calculated in \cite{22}. Furthermore, new classes of generalized cyclotomic sequences of period $2p^{m}$ were proposed in [8], which included the sequence presented in [12] as a special case, and they were shown to have high linear complexity. For recent development of the linear complexity of generalized cyclotomic sequences with different periods, the reader is referred ro \cite{7,8,9,10,13,20}.

In this paper, we construct two new classes of generalized cyclotomic binary sequences of period $2p^{m}$ and compute their linear complexity,  showing that they are  of  high linear complexity when $m\geq 2$.

\subsection*{Acknowledgement} Y. O. would like to thank the Morningside Center of Mathematics for hospitality where part of this paper was written.

\section{Generalized binary cyclotomic sequences of period $2p^{m}$}

Let $p$ be an odd prime and $g$ be a primitive root module $p^{m}$. Replace $g$ by $g+p^m$ if necessary, without loss of generality, we may assume that $g$ is an odd integer, and thus $g$ is a common primitive root module $p^{j}$ and $2p^{j}$ for all $1\leq j\leq m$. For a decomposition $p-1=ef$, write $d_{j}=\frac{\varphi(p^{j})}{e}=p^{j-1}f$ for each $j$ where $\varphi(\cdot)$ is Euler's totient function. For $i\in \Z$, $s=p^j$ or $2p^j$, define
 \begin{equation} D_{i}^{(s)}:=\{g^{i+d_{j}t}\pmod{s}:~0\leq t< e\} =g^{i}D_{0}^{(s)}. \end{equation}
One can see immediately  $D_{i}^{(s)}$ depends only on the congruence class $i\pmod{d_j}$. By abuse of notation we say an integer $n\in D_i^{(s)}$ if $n\pmod{s}\in D_i^{(s)}$.

For $(s,a)=(p^j,p^{m-j})$, $(p^j,2p^{m-j})$ or $(2p^j,p^{m-j})$, we define
 \begin{equation} a  D_{i}^{(s)}:=\{ a g^{i+d_{j}t}\pmod{as}:~0\leq t< e\}. \end{equation}
It is well known that $\{D_{0}^{(p^{j})},D_{1}^{(p^{j})},\ldots,D_{d_{j}-1}^{(p^{j})}\}$ forms a partition of $\Z_{p^{j}}^{*}$ (see \cite{11}), which we call the generalized cyclotomic class of order $d_{j}$ with respect to $p^{j}$, and
\begin{align}
  &\Z_{p^{m}} =\bigcup_{j=1}^{m}\bigcup_{i=0}^{d_{j}-1}p^{m-j}D_{i}^{(p^{j})}\cup \{0\}, \\
   &\Z_{2p^{m}} =\bigcup_{j=1}^{m}\bigcup_{i=0}^{d_{j}-1}p^{m-j}(2D_{i}^{(p^{j})}\cup D_{i}^{(2p^{j})})\cup \{0,p^{m}\}.
\end{align}
From now on, take
 \[ f=2^{r}\ (r\geq1),\ b\in \Z, \ \delta_j=\frac{d_j}{2}=\frac{p^{j-1}f}{2}. \]
In the following we  define two families of generalized cyclotomic sequences of period $2p^{m}$. The ideal of construction stems from Xiao et al. (\cite{18}) for the sequences of period $p^m$.

(i) The generalized cyclotomic binary sequence of period $2p^{m}$ is defined as $\bs^\infty= \{s_i\}_{i\geq 0}$ with
\begin{equation} \label{eq:1}
  s_{i}= \begin{cases}
 {1},\ & \textrm{if}\ i\pmod{2p^{m}}\in C_{1},\\
 {0},\ & \textrm{if}\ i\pmod{2p^{m}}\in C_{0},\\
 \end{cases}
\end{equation}
where
\begin{align}
  &C_{0}=\bigcup_{j=1}^{m}\bigcup_{i=\delta_j}^{d_j-1}
  p^{m-j}(2D_{i+b}^{(p^{j})}\cup D_{i+b}^{(2p^{j})})\cup \{p^{m}\}, \notag \\
  & C_{1}=\bigcup_{j=1}^{m}\bigcup_{i=0}^{\delta_j-1}
  p^{m-j}(2D_{i+b}^{(p^{j})}\cup D_{i+b}^{(2p^{j})})\cup\{0\}\notag.
\end{align}
For the above sequence $\bs^\infty$, the following theorem holds.
\begin{theorem} \label{theorem1}
For the generalized cyclotomic sequence defined by \eqref{eq:1} of period $2p^{m}$,

$(1)$ if $2^{e}\not\equiv \pm 1\pmod{p}$ or $2^e\equiv 1\bmod{p}$ but $2^e\not\equiv 1\bmod{p^2}$, then $\LC(\textbf{s}^{\infty})=2p^{m}$.

$(2)$ if $2^e\equiv -1\bmod{p}$ but $2^e\not\equiv -1\bmod{p^2}$, then $2p^m-2(p-1)\leq \LC(\textbf{s}^{\infty})\leq 2p^{m}-(p-1)$.
\end{theorem}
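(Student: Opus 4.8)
The plan is to evaluate $\deg\gcd(x^{2p^{m}}-1,\,c^{2p^{m}}(x))$ over $\F_2$ and feed it into \eqref{eq:LC}. In characteristic $2$ we have $x^{2p^{m}}-1=(x^{p^{m}}-1)^{2}$, so the zeros of $x^{2p^{m}}-1$ in $\overline{\F_2}$ are precisely the $p^{m}$-th roots of unity $\beta$, each of multiplicity $2$. Writing $c(x):=c^{2p^{m}}(x)=\sum_{i\in C_{1}}x^{i}$ and letting $v_{\beta}$ denote the order of vanishing of $c$ at $\beta$, we get $\deg\gcd=\sum_{\beta}\min(2,v_{\beta})$. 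Hence everything reduces to deciding, for each $p^{m}$-th root of unity $\beta$, whether $c(\beta)=0$ and, when it does, whether $c'(\beta)=0$. Since $c$ has coefficients in $\F_2$ we have $c(\beta^{2})=c(\beta)^{2}$, so the property $c(\beta)=0$ is constant along Frobenius orbits (the $2$-cyclotomic cosets modulo $p^{m}$); I would fix one $\beta$ per orbit and weight contributions by the orbit size.

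The main computation is the value $c(\beta)$, which I would organize by the order $p^{\ell}$ of $\beta$. Two combinatorial facts drive the reduction: reduction modulo $p^{t}$ gives bijections $D_{i}^{(2p^{t})}\to D_{i}^{(p^{t})}$ and $D_{i}^{(p^{j})}\to D_{i}^{(p^{t})}$ for $t\le j$, and multiplication by $2$ shifts the class index, $2D_{i}^{(p^{t})}=D_{i+\iota_{t}}^{(p^{t})}$, where $2\equiv g^{\iota_{t}}\pmod{p^{t}}$. Using these, the levels $j\le m-\ell$ each contribute $\varphi(p^{j})\equiv 0\pmod 2$ and drop out, while the surviving levels collapse onto the Gaussian periods $\eta_{i}^{(t)}=\sum_{d\in D_{i}^{(p^{t})}}\omega_{t}^{\,d}$ with $\omega_{t}=\beta^{p^{\ell-t}}$. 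The target identity is $c(\beta)=1+\sum_{t=1}^{\ell}R_{t}(\beta)$, with $R_{t}=P_{t}(b+\iota_{t})+P_{t}(b)$ and $P_{t}(c)=\sum_{i=c}^{c+\delta_{t}-1}\eta_{i}^{(t)}$; crucially the lower-level terms coincide with those for the order-$p^{\ell-1}$ root $\beta^{p}$, yielding the telescoping recursion $c(\beta)=c(\beta^{p})+R_{\ell}(\beta)$ that isolates the top-level increment.

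Next I would translate the congruence hypotheses: $2^{e}\equiv 1\pmod{p^{t}}$ means $\iota_{t}\equiv 0$, and $2^{e}\equiv -1\pmod{p^{t}}$ means $\iota_{t}\equiv\delta_{t}$, modulo $d_{t}$. When $\iota_{t}=0$ we get $R_{t}=2P_{t}(b)=0$; when $\iota_{t}=\delta_{t}$ the increment collapses to the full period sum $\sum_{i}\eta_{i}^{(t)}$, which equals $1$ for $t=1$ and $0$ for $t\ge 2$ in $\F_2$ (a Ramanujan-sum computation). For part (1), in the case $2^{e}\not\equiv\pm 1\pmod p$ one checks $\iota_{t}\not\equiv 0,\delta_{t}$ at every level, since they are determined modulo $f$ by $\iota_{1}$, so each $R_{t}$ is a proper partial period sum; in the case $2^{e}\equiv 1\pmod p$ but $2^{e}\not\equiv 1\pmod{p^{2}}$ the level-$1$ term vanishes and the non-triviality of $\iota_{2}$ governs the rest. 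In both cases the goal is to show $c(\beta)\ne 0$ for every nontrivial $\beta$ (note $c(1)=|C_{1}|=p^{m}\equiv 1$ automatically), whence $\gcd=1$ and $\LC=2p^{m}$.

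For part (2), $2^{e}\equiv -1\pmod p$ forces $\iota_{1}=\delta_{1}$, so $R_{1}=1$ and hence $c(\beta)=0$ for all $p-1$ roots $\beta$ of order $p$: these are the common zeros. The hypothesis $2^{e}\not\equiv -1\pmod{p^{2}}$ gives $\iota_{\ell}\ne\delta_{\ell}$ (and $\ne 0$) for every $\ell\ge 2$, and using $c(\beta)=\sum_{t=2}^{\ell}R_{t}(\beta)$ with each $R_{t}$ rewritten as a partial sum of the coarser periods $\xi_{i}^{(t)}=\eta_{i}^{(t)}+\eta_{i+\delta_{t}}^{(t)}$, the aim is to prove this sum is nonzero, so that no root of order $\ge p^{2}$ is a common zero. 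Then exactly the $p-1$ roots of order $p$ survive in the gcd, each contributing $\min(2,v_{\beta})\in\{1,2\}$, giving $\deg\gcd\in[p-1,\,2(p-1)]$ and the stated bounds. I expect the genuine obstacle to be the non-vanishing statements for the partial Gaussian-period sums $R_{t}$ — showing that such a proper nonempty sum of periods never equals the offending value in $\F_2(\omega_{t})$, uniformly across levels — together with computing $c'(\beta)$ precisely enough to decide whether $v_{\beta}=1$ or $2$; the latter is presumably what prevents an exact value in part (2) and leaves only the interval.
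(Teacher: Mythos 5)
Your framework is the same as the paper's: over $\F_2$ one has $x^{2p^m}-1=(x^{p^m}-1)^2$, so each $p^m$-th root of unity is a double root, $\LC(\bs^\infty)$ is sandwiched between $2p^m-2Z$ and $2p^m-Z$ where $Z$ is the number of $a\in\Z_{p^m}$ with $s(\beta^a)=0$, and $s(\beta^a)$ is evaluated level by level through the generalized cyclotomic classes (your reduction-plus-telescoping computation is the paper's Lemma~\ref{lemma:2} and Proposition~\ref{prop:1}; your dictionary $\iota_t\equiv 0$ resp. $\delta_t \Leftrightarrow 2^e\equiv \pm1\pmod{p^t}$, and your treatment of the order-$p$ roots in part (2), also agree with the paper). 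But the proposal stops exactly where the real proof begins. You write that ``the goal is to show $c(\beta)\neq 0$'' and that you ``expect the genuine obstacle to be the non-vanishing statements for the partial Gaussian-period sums'' --- that obstacle is precisely the content of the paper's Proposition~\ref{prop:2}, and nothing in your proposal addresses it. Moreover, your stated target is slightly misaimed: since $P_t(b+\iota_t)=P_t(b)^2$, one has $\sum_t R_t=A+A^2$ with $A=\sum_t P_t(b)$, so $c(\beta)=1+A+A^2$ vanishes if and only if $A\in\F_4\setminus\F_2$, i.e. $A$ is a primitive cube root of unity. Proving that each individual $R_t$ is a nonzero ``proper partial period sum'' would not suffice, since nonvanishing of the summands says nothing about the total sum avoiding the value $1$; the correct reduction is the single field-membership statement about $A$.

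That statement is the hard part, and it is not routine. Two cases are elementary via the Frobenius relation $A^2=A_{v+h}$ (where $2\in D_h^{(p^m)}$) together with $A_{v+\delta_m}=A_v+1$ from Lemma~\ref{lemma:4}: namely $2\in D_0^{(p^m)}$ (then $A\in\F_2$, so $1+A+A^2=1$) and $\delta_1\nmid h$ (a parity contradiction if $A\in\F_4$). But the two cases carrying your mod-$p^2$ hypotheses --- $2^e\equiv 1\pmod p$, $\not\equiv 1\pmod{p^2}$, and $2^e\equiv-1\pmod p$, $\not\equiv-1\pmod{p^2}$ --- require splitting $A=A_{1,v}+\tA_v$, showing the hypothesis forces $\gcd(h,\delta_m)=\delta_1$ and hence the invariance $\tA_v=\tA_{v+n\delta_1}$ for all $n\in\Z$, and then invoking the order and field-degree results of Edemskiy--Li--Zeng--Helleseth \cite{23} (the order of $2$ modulo $p^j$ is $\tau p^{j-1}$ for $j\geq 2$, whence $[\F_2(\beta):\F_2(\beta_{m-1})]=p$) to show this invariance is impossible. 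Without this step, or a substitute for it, the second alternative of part (1) and all of part (2) remain unproved; as it stands your text is an outline of the paper's strategy with its central step missing.
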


(ii) The modified generalized cyclotomic binary sequence of period $2p^{m}$ is defined as $\widetilde{\bs}^\infty= \{\widetilde{s}_i\}_{i\geq 0}$ with
\begin{equation} \label{eq:2}
  \widetilde{s}_{i}= \begin{cases}
 {1},\ & \textrm{if}\ i\pmod{2p^{m}}\in \widetilde{C}_{1},\\
 {0},\ & \textrm{if}\ i\pmod{2p^{m}}\in \widetilde{C}_{0}, \\
 \end{cases}
\end{equation}
where
\begin{align}
  &\widetilde{C}_{0}=\bigcup_{j=1}^{m}
  p^{m-j}(\bigcup_{i=0}^{\delta_j-1}2D_{i+b}^{(p^{j})}
  \bigcup_{i=\delta_j}^{d_j-1} D_{i+b}^{(2p^{j})})\cup \{p^{m}\}, \notag\\
  & \widetilde{C}_{1}=\bigcup_{j=1}^{m} p^{m-j}(\bigcup_{i=\delta_j}^{d_j-1}
2D_{i+b}^{(p^{j})}\bigcup_{i=0}^{\delta_j-1}D_{i+b}^{(2p^{j})})
\cup\{0\}\notag.
\end{align}
For the above sequence $\widetilde{\bs}^\infty$, the following theorem holds.
\begin{theorem} \label{theorem2} For the modified generalized cyclotomic sequence defined by \eqref{eq:2} of period $2p^{m}$,

$(1)$ if $2^{e}\not\equiv 1\pmod{p}$, then $\LC(\widetilde{\textbf{s}}^{\infty})=2p^{m}$;

$(2)$ if $2^e\equiv 1\pmod{p}$ but $2^e\not\equiv 1\pmod{p^{2}}$,
then  $2p^m-2(p-1)\leq \LC(\widetilde{\bs}^\infty)\leq 2p^m-(p-1)$.
\end{theorem}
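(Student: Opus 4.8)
The plan is to apply the linear complexity formula \eqref{eq:LC}. Since we work in characteristic $2$, $x^{2p^{m}}-1=(x^{p^{m}}-1)^{2}$, so writing $c(x)=\sum_{i\in\widetilde{C}_{1}}x^{i}$ for the characteristic polynomial of $\bws^{\infty}$, I need $\deg\gcd\big((x^{p^{m}}-1)^{2},c(x)\big)=\sum_{\alpha^{p^{m}}=1}\min\{2,\mathrm{mult}_{\alpha}(c)\}$. Thus for each $p^{m}$-th root of unity $\alpha$ I must decide whether $c(\alpha)=0$ and, if so, whether $\alpha$ is a double root; over $\F_{2}$ the latter is detected by $c'(\alpha)=0$. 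A first easy observation is that $c(1)=|\widetilde{C}_{1}|\bmod 2=p^{m}\bmod 2=1$, so $\alpha=1$ never contributes.

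Next I would reduce the evaluation at a primitive $p^{h}$-th root of unity $\alpha=\beta^{k}$ ($1\le h\le m$, $\beta$ a fixed primitive $p^{m}$-th root of unity) to Gaussian periods. The key combinatorial lemma is that reduction modulo $p^{n}$ sends both $D_{i}^{(p^{j})}$ and $D_{i}^{(2p^{j})}$ bijectively onto $D_{i\bmod d_{n}}^{(p^{n})}$ (for $n\le j$), since $g$ stays a primitive root and $\gcd(p,e)=1$. Feeding this into $c(\beta^{k})$, only the levels $j$ with $j+h>m$ survive (the rest sum an even number of $1$'s), and writing $\eta^{(n)}_{c}=\sum_{a\in D_{c}^{(p^{n})}}\gamma_{n}^{a}$ for a primitive $p^{n}$-th root $\gamma_{n}$, $\ell=\log_{g}2$, and using $f=2^{r}$ together with the Frobenius identity $(\eta^{(n)}_{c})^{2}=\eta^{(n)}_{c+\ell}$, the whole expression collapses to
\[
c(\beta^{k})=Q+Q^{2},\qquad Q=\sum_{n=1}^{h}P_{n},\qquad P_{n}=\sum_{c=0}^{\delta_{n}-1}\eta^{(n)}_{c+B_{n}},
\]
a sum of $\delta_{n}$ consecutive periods at depth $n$ (here $B_{n}$ absorbs $b$ and $\log_{g}k_{0}$). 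A parallel computation, isolating the odd exponents of $\widetilde{C}_{1}$, gives $c'(\beta^{k})=\beta^{-k}(Q+\varepsilon)$ with $\varepsilon=\tfrac{p^{m-h}-1}{2}\bmod 2$. Hence $\beta^{k}$ is a root iff $Q\in\F_{2}$, and a double root iff $Q=\varepsilon$.

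The case analysis then runs on the Frobenius action, using $2^{e}\equiv 1\pmod{p^{n}}\Longleftrightarrow \ell\equiv 0\pmod{d_{n}}\Longleftrightarrow$ all depth-$n$ periods lie in $\F_{2}$. For $(1)$, $2^{e}\not\equiv 1\pmod p$ forces $\ell\not\equiv 0\pmod{d_{n}}$ at every depth, so I must show $Q\notin\F_{2}$ for every $h$; then there are no roots, $\deg\gcd=0$ and $\LC(\bws^{\infty})=2p^{m}$. For $(2)$, $\ell\equiv 0\pmod{d_{1}}$ puts every depth-$1$ period in $\F_{2}$, so for $h=1$ we always have $Q=P_{1}\in\F_{2}$: all $p-1$ primitive $p$-th roots of unity are roots of $c$, each of multiplicity $1$ or $2$. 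Since $\ell\not\equiv 0\pmod{d_{2}}$ forces $\ell\not\equiv0\pmod{d_{h}}$ for $h\ge 2$, I again reduce to $Q\notin\F_{2}$ for $h\ge2$, i.e. no further roots. Granting this, the $p-1$ primitive $p$-th roots contribute between $p-1$ (all simple) and $2(p-1)$ (all double) to $\deg\gcd$, giving exactly $2p^{m}-2(p-1)\le\LC(\bws^{\infty})\le 2p^{m}-(p-1)$; the theorem states a range rather than an equality precisely because deciding simple-versus-double amounts to comparing $P_{1}$ with $\varepsilon$, finer cyclotomic-number data that the bound sidesteps.

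The main obstacle is the non-vanishing statement $Q\notin\F_{2}$ whenever Frobenius acts nontrivially at the top depth $h$. My approach is a subfield argument: write $Q=P_{h}+R$ with $R=\sum_{n<h}P_{n}\in\F_{2^{D_{h-1}}}$, where $D_{n}=\ord_{p^{n}}(2)$; then $Q\in\F_{2}$ forces $P_{h}\in\F_{2^{D_{h-1}}}$, and applying the power of Frobenius fixing $\F_{2^{D_{h-1}}}$, which shifts the period index by a nonzero multiple $\lambda$ of $d_{h-1}$, this forces a period sum $\sum_{c\in A\,\triangle\,(A+\lambda)}\eta^{(h)}_{c}$ over two arcs to vanish. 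The hypothesis $2^{e}\not\equiv 1\pmod{p^{2}}$ in $(2)$ guarantees the tower $D_{n}$ is strictly increasing, so the subfield step is genuine; the crux is then to preclude such vanishing using the only structural relations among the periods, namely the fiber relations $\sum_{t=0}^{p-1}\eta^{(h)}_{c_{0}+td_{h-1}}=0$, and to show the arc-sum is not a combination of them. I expect this control of the $\F_{2}$-linear independence of partial Gaussian-period sums — in particular in the residual Wieferich-type situation under the weaker hypothesis of $(1)$, where the tower $D_{n}$ might stall — to be the technical heart of the argument.
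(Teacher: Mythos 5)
Your framework coincides with the paper's own: the multiplicity-two bounds coming from $x^{2p^m}-1=(x^{p^m}-1)^2$, the identity $c(\beta^k)=Q+Q^2$ (this is Proposition~\ref{prop:1}, your $Q$ being the paper's $A_{b+k}$), and the case split ``root iff $Q\in\F_2$''. But the proposal is incomplete at exactly the point the whole theorem rests on: the non-vanishing statement $Q\notin\F_2$ when Frobenius acts nontrivially at the top depth is never proved -- you explicitly defer it as the expected ``technical heart''. That statement is the entire content of the paper's Proposition~\ref{prop:2}. For case (2) the paper proves it by setting $\tA_v=A_v-A_{1,v}$, showing $A_v\in\F_2$ would force $\tA_v=\tA_{v+n\delta_1}$ for all $n$ (using $\gcd(\ell,\delta_m)=\delta_1$, where $\ell$ is your $\log_g 2$), and then ruling this out via the order computation $\ord_{p^j}(2)=\tau p^{j-1}$ for $j\geq 2$, the degree formula $[\F_2(\beta):\F_2(\beta_{m-1})]=p$, and the argument of \cite[Proposition 2]{23}. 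Your reduction (descend to $\F_{2^{D_{h-1}}}$, apply the Frobenius power fixing it, reduce to non-vanishing of a symmetric-difference arc sum of periods) is essentially the same reduction in different clothing; the closing step -- that such a partial period sum cannot vanish -- is precisely the hard input, is not a formal consequence of the fiber relations $\sum_{t}\eta^{(h)}_{c_0+td_{h-1}}=0$ you list, and without it neither (1) nor (2) is established.

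Moreover, your route is misdirected for part (1). Hypothesis (1) does not exclude Wieferich-type stalling of the tower $\ord_{p^n}(2)$: if $\ord_{p^2}(2)=\ord_p(2)$ then the Frobenius power fixing $\F_{2^{D_{h-1}}}$ also fixes $\F_2(\gamma_h)$, your shift $\lambda$ is trivial, and the subfield argument yields nothing -- yet Theorem~\ref{theorem2}(1) carries no Wieferich exception. The paper's Proposition~\ref{prop:2}(1) settles (1) by an elementary argument needing no tower at all: if $Q=A_v\in\F_2$ then $A_{v+n\ell}=A_v$ for all $n$ by \eqref{eq:key}, while $2\notin D_0^{(p)}$ means $f\nmid \ell$, so there is $x_1$ with $\ell x_1\equiv\delta_m\pmod{d_m}$, and Lemma~\ref{lemma:4} gives $A_{v+\delta_m}=A_v+1$, a contradiction. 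You should adopt that argument for (1) and reserve the subfield analysis for (2), where the hypothesis $2^e\not\equiv 1\pmod{p^2}$ genuinely forces the tower to be strict. Finally, a small but real error: in your derivative formula the odd exponents give $c'(\beta^k)=\beta^{-k}\sum_{j}\bH_{b}^{(p^j)}(\beta^k)$, and the terms $\frac{p^{j-1}(p-1)}{2}$ from depths $j\leq m-h$ must be retained; since $\frac{p^{m-h}-1}{2}\equiv (m-h)\frac{p-1}{2}\pmod 2$, they cancel your $\varepsilon$, so in fact $\varepsilon=0$. This is harmless for the stated bounds (which never use $c'$), but it would corrupt any attempt at the exact multiplicity count, i.e.\ the paper's Conjecture.
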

We give several remarks about our main results.
\begin{remark} (1) If $p$ is a non-Weiferich prime (i.e. $2^{p-1}\not\equiv 1\pmod{p^2}$), then our results cover all possible scenarios because $2^e\not\equiv \pm 1\pmod{p^2}$.

(2)  A key argument of the computation follows from the work of Edemskiy et. al.(\cite{23}). Based on our computation,  a new proof of the conjecture by Xiao et al. in \cite{18} can also be achieved.

(3) One would expect similar results hold for any even $f$, not only for $f$ a $2$-power.
\end{remark}

 The inequalities in Theorem~\ref{theorem1}(2) and Theorem~\ref{theorem2}(2), arising from the inseparability of the polynomial $x^{2p^m}-1$ over $\F_2$, are strong enough to deduce that the two generalized sequences are of high linear complexity if $m\geq 2$. For the exact values there, based on numerical evidence, we have the following conjecture:

 \begin{conjecture} $(1)$ If $2^e\equiv -1\bmod{p}$ but $2^e\not\equiv -1\bmod{p^2}$, then $\LC(\textbf{s}^{\infty})= 2p^{m}-(p-1)$.

 $(2)$ If $2^e\equiv 1\pmod{p}$ but $2^e\not\equiv 1\pmod{p^{2}}$,
then  $\LC(\widetilde{\bs}^\infty)= 2p^m-(p-1)-e$.
 \end{conjecture}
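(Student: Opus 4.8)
The plan is to compute $\LC$ exactly via the relation \eqref{eq:LC}, which requires determining $\deg\gcd(x^{2p^m}-1, c^{2p^m}(x))$ over $\F_2$ in each of the two cases. The first step is to factor $x^{2p^m}-1 = (x^{p^m}-1)^2$ over $\F_2$ (since we are in characteristic $2$), and then factor $x^{p^m}-1 = \prod_{j=0}^{m}\Phi_{p^j}(x)$ into cyclotomic polynomials. Over $\F_2$ each $\Phi_{p^j}$ splits into $\varphi(p^j)/\ord_{p^j}(2)$ irreducible factors of degree $\ord_{p^j}(2)$; under the hypotheses of the conjecture these orders are controlled by $e$ (namely $2^e\equiv\pm 1$). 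The inseparability means every prospective root $\beta$ of $x^{p^m}-1$ must be tested for whether $c^{2p^m}(\beta)=0$ \emph{and} whether its multiplicity in the gcd is $1$ or $2$; this doubling is exactly the source of the $2(p-1)$ versus $(p-1)$ discrepancy in the theorem's bounds, and pinning it down is the crux of the conjecture.

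Next I would set up the generating-function machinery used in the earlier lemmas: write $c^{2p^m}(x)=\sum_{i\in C_1}x^i$ (resp. $\widetilde{C}_1$) and express it through the cyclotomic-class sums $T_i^{(s)}(x)=\sum_{a\in D_i^{(s)}}x^a$. The key is to evaluate $c^{2p^m}(\beta)$ where $\beta$ ranges over roots of the various $\Phi_{p^j}$. Following the approach attributed to Edemskiy et al.\ in Remark~(2), I would exploit that raising a root to the power $2$ permutes the cyclotomic classes $D_i^{(s)}$ according to the action of $2$ in $(\Z/p^j\Z)^*$; when $2^e\equiv -1$ (case (1)) or $2^e\equiv 1$ (case (2)) this action has a transparent orbit structure, letting one reduce the evaluation of $c^{2p^m}(\beta)$ to a single ``defining'' character sum over the top level $j=m$. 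The residue $2^e\not\equiv \pm1 \bmod p^2$ guarantees that for $j<m$ the contributions vanish or cancel, so that exactly the level-$m$ roots lying in $p^0 D_i^{(p^m)}$ produce the deficiency.

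The heart of the computation is then a multiplicity analysis at each surviving root $\beta$ (a primitive $p^m$-th root of unity in the algebraic closure). Because $x^{2p^m}-1=(x^{p^m}-1)^2$, a factor $(x-\beta)$ contributes $2$ to $\deg(x^{2p^m}-1)$ at $\beta$, and I must decide whether $\gcd$ picks up multiplicity $0$, $1$, or $2$. This is governed by whether $c^{2p^m}(\beta)=0$ and whether the formal derivative (equivalently $\sum_{i\in C_1} i\,\beta^{i-1}$, read through the Hasse derivative in characteristic $2$) also vanishes. I expect the main obstacle to be precisely this derivative/multiplicity bookkeeping: showing that at each of the $p-1$ surviving primitive-$p^m$-th roots (in case (1), $2^e\equiv -1$) the gcd has multiplicity exactly $1$ rather than $2$, yielding the loss $p-1$ and hence $\LC=2p^m-(p-1)$. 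For case (2), the same analysis must additionally detect the extra factor $\Phi_1(x)=x-1$ contribution (the ``$+1$'' points $\{0,p^m\}$ and the balance between the $2D$ and $D$ classes in $\widetilde{C}_1$), producing the additional $-e$; concretely one shows the sum over a single cyclotomic coset of size $e$ at $\beta=1$ vanishes to first order, accounting for the extra $e$. Establishing that these vanishing orders are exactly as predicted---no more, no less---for every relevant root simultaneously, uniformly in $j$, is the delicate step, and it is what upgrades the theorem's interval bounds to the conjectured exact values.
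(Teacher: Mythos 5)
You should first note that the statement you are proving is precisely the paper's \emph{Conjecture}: the authors do not prove it, they prove only the interval bounds $2p^m-2(p-1)\leq \LC \leq 2p^m-(p-1)$ (Theorems~\ref{theorem1}(2) and \ref{theorem2}(2)) and support the exact values by Magma computations. So there is no paper proof to match, and the question is whether your proposal actually closes the gap between the known bounds and the conjectured values. It does not. Your outline correctly identifies where the discrepancy lives --- the multiplicity ($1$ versus $2$) of each vanishing root of $x^{2p^m}-1=(x^{p^m}-1)^2$ in the gcd, decidable via the derivative (Hasse derivative) of the characteristic polynomial --- but you then explicitly defer exactly this computation (``the delicate step''). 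Everything before that point merely reproves what the paper already establishes: Propositions~\ref{prop:1} and \ref{prop:2} show $s(\beta^a)$ (resp.\ $\widetilde s(\beta^a)$) vanishes exactly for $a\in p^{m-1}\Z_p^*$, which yields the two-sided bound and nothing more. A proof of the conjecture must evaluate the derivative sums $\sum_{i\in C_1} i\,\beta^{a(i-1)}$ (resp.\ for $\widetilde C_1$) at those $p-1$ roots, and your proposal contains no mechanism for doing so; note these sums are no longer pure cyclotomic-class sums because of the weights $i$, so the $\bH$-machinery does not apply directly and a genuinely new computation is needed.

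There are also two concrete misidentifications that would derail the plan as written. First, the surviving roots are \emph{not} primitive $p^m$-th roots of unity: by the proofs of Theorems~\ref{theorem1} and \ref{theorem2}, $s(\beta^a)=0$ precisely for $a\in p^{m-1}\Z_p^*$, i.e.\ at the $p-1$ primitive $p$-th roots of unity (level $j=1$), while at all levels $j\geq 2$ the values are nonzero; had the deficiency come from primitive $p^m$-th roots there would be $p^{m-1}(p-1)$ of them, contradicting the count $p-1$. Second, in case (2) the extra $-e$ cannot be detected at $\beta=1$ via $\Phi_1$: Proposition~\ref{prop:1} gives $\widetilde s(1)=1\neq 0$, so $x-1$ divides the gcd to order $0$, not to first order. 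The conjectured deficiency $(p-1)+e$ should instead mean that among the $f$ cyclotomic cosets $p^{m-1}D_{i}^{(p)}$ (each of size $e$) partitioning $p^{m-1}\Z_p^*$, exactly one coset consists of roots of multiplicity $2$ (where the derivative also vanishes) and the remaining $f-1$ cosets of multiplicity $1$ --- identifying \emph{which} coset, and proving the derivative is nonzero on the others, is the open content of the conjecture, and it remains open in your write-up.
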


\section{Proof of the main results}

Let $\beta=\beta_m$ be a fixed primitive $p^{m}$-th root of unity, which can be considered as an element in  $\GF(2^{n})$ where $n$ is the order of $2$ module $p^{m}$. For $l<m$, $\beta_l=\beta_m^{p^{m-l}}$ is a primitive $p^l$-th root of unity.

We fix the decomposition $p-1=ef$, $f=2^r$ for $r\geq 1$, $\delta_j=\frac{d_j}{2}=\frac{p^{j-1} f}{2}$ for $1\leq j\leq m$ and $b\in \Z$. Note that $\delta_1=\frac{f}{2}$ and $d_1=f$. Set
 \[ \bH_{b}^{(p^{j})}:=\bigcup_{i=0}^{\delta_j-1} p^{m-j}D_{i+b}^{(p^{j})},\quad
 H_{b}^{(p^{j})}:=2\bH_{b}^{(p^{j})}, \quad H_{b}^{(2p^{j})}:=\bigcup_{i=0}^{\delta_j-1}p^{m-j}D_{i+b}^{(2p^{j})} \]
and
 \[ \bH_{b}^{(p^{j})}(x):=\sum_{t\in \bH_{b}^{(p^{j})}} x^t, \ \  H_{b}^{(p^{j})}(x):=\sum_{t\in H_{b}^{(p^{j})}} x^t=\bH_{b}^{(p^{j})}(x^2),\ \ H_{b}^{(2p^{j})}(x):=\sum_{t\in H_{b}^{(2p^{j})}} x^t. \]
Set
\begin{align*}
  &s(x):=\sum_{t\in C_{1}}x^{t}=1+\sum_{j=1}^{m}(H_{b}^{(p^{j})}(x)+H_{b}^{(2p^{j})}(x)),\\
  &\widetilde{s}(x):=\sum_{t\in \widetilde{C}_{1}}x^{t}=1+\sum_{j=1}^{m}({H}_{b+\delta_j}^{(p^{j})}(x)+
  H_{b}^{(2p^{j})}(x)).
\end{align*}
To study the linear complexity of $\textbf{s}^{\infty}$ and $\widetilde{\textbf{s}}^{\infty}$, note that there is some subtlety here: the polynomial $x^{2p^m}-1$ is inseparable, each root $\beta^a$ ($a\in \Z_{p^m}$) is of multiplicity $2$, so by \eqref{eq:LC}, we have the inequalities
 \begin{align} & 2p^m-2|\{a\in \Z_{p^m}\mid s(\beta^a)=0\}|\leq \LC(\textbf{s}^{\infty})\leq 2p^m-|\{a\in \Z_{p^m}\mid s(\beta^a)=0\}|,\\
 & 2p^m-2|\{a\in \Z_{p^m}\mid \ws(\beta^a)=0\}|\leq \LC(\bws^{\infty})\leq 2p^m-|\{a\in \Z_{p^m}\mid \ws(\beta^a)=0\}|. \end{align}
Since the polynomial is valued over a field of characteristic $2$, we have
 \begin{align}
 & H_{b}^{(p^{j})}(\beta^{a})=\bH_{b}^{(p^{j})}(\beta^{2a})
 =(\bH_{b}^{(p^{j})}(\beta^a))^2, \\
 & H_{b}^{(2p^{j})}(\beta^{a})=\bH_{b}^{(p^{j})}(\beta^a).
   \end{align}
To study $s(\beta^a)$ and $\ws(\beta^a)$, it suffices to evaluate $\bH_{b}^{(p^{j})}(\beta^a)$  for each $j\leq m$.

\begin{lemma}[\cite{18}, Lemma 4] \label{lemma:4} We have
 \begin{align}  &\bH_{b}^{(p)}(\beta)+\bH_{b+\frac{f}{2}}^{(p)}(\beta)=\sum_{t\in p^{m-1}\Z_{p}^{*}}\beta^{t}=1,\\
 & \bH_{b}^{(p^j)}(\beta)+\bH_{b+\delta_j}^{(p^j)}(\beta)=\sum_{t\in p^{m-j}\Z_{p^j}^{*}}\beta^{t}=0 \ \text{if}\ 2\leq j\leq m.
 \end{align}
\end{lemma}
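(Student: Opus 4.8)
The plan is to reduce the lemma to two elementary facts: a partition statement about the cyclotomic classes and a Ramanujan-type character sum evaluated over $\GF(2^n)$. First I would observe that, by the defining relation $D_i^{(p^j)}=g^i D_0^{(p^j)}$, each class $D_i^{(p^j)}$ depends only on $i\bmod d_j$. Since $\delta_j=d_j/2$, the two index ranges $i=b,\ldots,b+\delta_j-1$ and $i=b+\delta_j,\ldots,b+d_j-1$ together run through a complete residue system modulo $d_j$. Because $\{D_0^{(p^j)},\ldots,D_{d_j-1}^{(p^j)}\}$ partitions $\Z_{p^j}^*$, this shows that $\bH_b^{(p^j)}$ and $\bH_{b+\delta_j}^{(p^j)}$ are disjoint and that their union is exactly $p^{m-j}\Z_{p^j}^*$. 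Summing the corresponding monomials yields the first equality in each displayed line, namely $\bH_b^{(p^j)}(\beta)+\bH_{b+\delta_j}^{(p^j)}(\beta)=\sum_{t\in p^{m-j}\Z_{p^j}^*}\beta^t$.

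The second step is to evaluate $S_j:=\sum_{t\in p^{m-j}\Z_{p^j}^*}\beta^t$. Writing $t=p^{m-j}u$ with $u\in\Z_{p^j}^*$ and setting $\beta_j=\beta^{p^{m-j}}$, which is a primitive $p^j$-th root of unity, we get $S_j=\sum_{u\in\Z_{p^j}^*}\beta_j^u$. I would compute this by inclusion--exclusion: the sum of $\beta_j^u$ over all of $\Z_{p^j}$ vanishes, being the sum of all $p^j$-th roots of unity, while the non-units are precisely the multiples $u=pv$, whose contribution is $\sum_{v=0}^{p^{j-1}-1}\beta_{j-1}^v$ with $\beta_{j-1}=\beta_j^p$. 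For $j\geq 2$ this is again a complete sum of $p^{j-1}$-th roots of unity, hence $0$, so $S_j=0$. For $j=1$ the only non-unit is $u=0$, contributing $1$, so $S_1=-1$; over $\GF(2^n)$ this is $1$. These are precisely the right-hand sides asserted in the lemma.

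There is essentially no serious obstacle here; the entire content is bookkeeping. The only points that require care are the distinction between $j=1$ and $j\geq 2$ in the non-unit contribution, and the fact that we work in characteristic $2$, so that the $-1$ appearing for $j=1$ must be read as $1$. One should also confirm that the substitution $t=p^{m-j}u$ does not collapse distinct residues---that is, that $\beta^{p^{m-j}}$ genuinely has order $p^j$---which is immediate since $\beta=\beta_m$ is a primitive $p^m$-th root of unity.
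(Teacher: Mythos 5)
Your proof is correct. Note that the paper itself gives no proof of this lemma --- it is quoted from \cite{18} (Lemma 4) --- and your argument is the standard one that source uses: since $D_i^{(p^j)}$ depends only on $i \bmod d_j$ and the classes partition $\Z_{p^j}^*$, the two index blocks of length $\delta_j = d_j/2$ tile a full residue system, so the sum collapses to $\sum_{u\in\Z_{p^j}^*}\beta_j^u$, which the geometric-series/inclusion--exclusion computation evaluates to $-1=1$ for $j=1$ and $0$ for $j\geq 2$ in characteristic $2$. Your bookkeeping on both delicate points (the $j=1$ versus $j\geq 2$ distinction, and reading $-1$ as $1$ over $\GF(2^n)$) is accurate, so nothing is missing.
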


\begin{lemma} \label{lemma:2}   Let $a=p^{l}u\in p^{l}D_{k}^{(p^{m-l})}$  where $0\leq l\leq m-1$. Then for $j=1,2,\cdots,m$,
\begin{enumerate}
 \item if $j\leq l$, $\bH^{(p^{j})}_{b}(\beta^{a}) = \frac{p^{j-1}(p-1)}{2}$;
\item if $j=l+1$, $\bH^{(p^{j})}_{b}(\beta^{a}) =
 {\frac{p^{l}-1}{2}+\bH_{b+k}^{(p)}(\beta)}$;
 \item if $j>l+1$, $\bH^{(p^{j})}_{b}(\beta^{a}) =
 \bH_{b+k}^{(p^{j-l})}(\beta)$.
\end{enumerate}
\end{lemma}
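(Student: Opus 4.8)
The plan is to compute the character sum
$\bH_{b}^{(p^{j})}(\beta^{a})=\sum_{i=0}^{\delta_j-1}\sum_{w\in D_{i+b}^{(p^{j})}}\beta^{a\,p^{m-j}w}$
directly, by reducing modulo successive powers of $p$. Writing $a=p^{l}u$ and each index of $\bH_{b}^{(p^{j})}$ as $p^{m-j}w$ with $w\in\Z_{p^{j}}^{*}$, the exponent becomes $\beta^{p^{m-j+l}uw}$. Since $\beta^{p^{m}}=1$, when $j\leq l$ the exponent is divisible by $p^{m}$, so every term equals $1$ and the sum collapses to the cardinality $|\bH_{b}^{(p^{j})}|=\delta_j e=\frac{p^{j-1}(p-1)}{2}$, read as an element of $\GF(2^{n})$; this gives item (1). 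For $j>l$ I would set $\gamma:=\beta^{p^{m-j+l}}=\beta_{j-l}$, a primitive $p^{j-l}$-th root of unity, so that each term is $\gamma^{uw}$ and depends only on $uw\bmod p^{j-l}$.

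The key structural step is to track the cyclotomic classes under the reduction $\Z_{p^{j}}^{*}\twoheadrightarrow\Z_{p^{j-l}}^{*}$. Because $\gcd(p,e)=1$ and $d_j=p^{l}d_{j-l}$, the assignment $\tau\mapsto\tau'$ determined by $g^{d_j\tau}\equiv g^{d_{j-l}\tau'}\pmod{p^{j-l}}$ is a bijection of $\{0,\dots,e-1\}$; hence $w\mapsto w\bmod p^{j-l}$ carries each class $D_{i+b}^{(p^{j})}$ bijectively onto $D_{i+b}^{(p^{j-l})}$ (with the index read mod $d_{j-l}$). Moreover $u_0:=u\bmod p^{j-l}$ lies in $D_{k}^{(p^{j-l})}$, and multiplication by $u_0$ permutes the classes by shifting the index by $k$. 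Combining these, each inner sum rewrites as $\sum_{w\in D_{i+b}^{(p^{j})}}\gamma^{uw}=\sum_{v\in D_{i+b+k}^{(p^{j-l})}}\beta_{j-l}^{v}=:T_{i+b+k}$, a quantity periodic of period $d_{j-l}$ in its index, so that $\bH_{b}^{(p^{j})}(\beta^{a})=\sum_{i=0}^{\delta_j-1}T_{i+b+k}$.

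It then remains to sum the periodic sequence $T$ over a run of length $\delta_j$. Here the arithmetic identity $\delta_j=\frac{p^{l}-1}{2}\,d_{j-l}+\delta_{j-l}$, valid because $\delta_j=\frac{p^{l}}{2}d_{j-l}$ with $p^{l}$ odd, is what makes everything work: it splits the run into $\frac{p^{l}-1}{2}$ complete periods followed by a partial run of length $\delta_{j-l}$. Each complete period contributes $\sum_{v\in\Z_{p^{j-l}}^{*}}\beta_{j-l}^{v}$, which by Lemma~\ref{lemma:4} (applied with modulus $p^{j-l}$) equals $0$ when $j-l\geq2$ and $1$ when $j-l=1$, while the partial run contributes exactly $\bH_{b+k}^{(p^{j-l})}(\beta)$ by definition. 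Thus for $j>l+1$ the complete periods vanish and $\bH_{b}^{(p^{j})}(\beta^{a})=\bH_{b+k}^{(p^{j-l})}(\beta)$, which is item (3); for $j=l+1$ the $\frac{p^{l}-1}{2}$ complete periods survive and yield $\frac{p^{l}-1}{2}+\bH_{b+k}^{(p)}(\beta)$, which is item (2).

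I expect the main obstacle to be the bookkeeping in the second paragraph: verifying that reduction modulo $p^{j-l}$ sends each class $D_{i+b}^{(p^{j})}$ bijectively onto $D_{i+b}^{(p^{j-l})}$ (rather than onto a proper subset, or with repetitions), which rests on the coprimality $\gcd(p^{l},e)=1$, together with correctly identifying the index shift by $k$ coming from $u_0\in D_{k}^{(p^{j-l})}$. Once all inner sums are recognized as a single periodic sequence $T_{i+b+k}$, the final count is a clean application of the decomposition of $\delta_j$ and Lemma~\ref{lemma:4}.
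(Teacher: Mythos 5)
Your proposal is correct and follows essentially the same route as the paper: the same reduction of the exponent to a primitive $p^{j-l}$-th root of unity, the same decomposition $\delta_j=\frac{p^l-1}{2}d_{j-l}+\delta_{j-l}$ splitting the index run into complete periods plus a partial run shifted by $k$, and the same appeal to Lemma~\ref{lemma:4} to evaluate the full-period sums. The only difference is that you make explicit the bijectivity of class reduction modulo $p^{j-l}$ (via $\gcd(p^l,e)=1$), which the paper leaves implicit in its multiset-counting statement.
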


\begin{proof} First note the computation here is carried out in $\GF(2^{n})$. By definition,
\begin{equation}\label{eq:5}
  \bH_{b}^{(p^{j})}(\beta^{a})=\sum_{t\in H_{b}^{(p^{j})}}\beta^{at}=\sum_{i=0}^{\delta_j-1}\sum_{t\in p^{m-j}D_{i+b}^{(p^{j})}}\beta^{tp^{l}u}=\sum_{i=0}^{\delta_j-1}\sum_{t\in p^{m+l-j}D_{i+b}^{(p^{j})}}\beta^{tu}.
\end{equation}

If $j\leq l$, each term in $ \bH_{b}^{(p^{j})}(\beta^{a})$ defined in \eqref{eq:5} equals to $1$, hence
 \[ \bH_{b}^{(p^{j})}(\beta^{a})=\delta_j\mid D_{i+b}^{(p^{j})}\mid=\delta_jp^{j-1}\frac{p-1}{p^{j-1}f}
 = \frac{p^{j-1}(p-1)}{2}. \]

If $j>l$, let $s=j-l$, then
\begin{equation}\label{eq:6}
  \bH_{b}^{(p^{j})}(\beta^{a})=\sum_{i=0}^{\delta_j-1}\sum_{t\in p^{m+l-j}D_{i+b}^{(p^{j})}}\beta^{tu}=\sum_{i=0}^{\delta_j-1}\sum_{t\in D_{i+b}^{(p^{j})}}\beta^{p^{m-s}tu}.
\end{equation}
Note that when $i$ passes through $\{0,1,\ldots,\delta_j-1\}$, $i\pmod{d_s}$ takes value $\frac{p^{l}-1}{2}$ times on each element in $\{0,1,\ldots,d_s-1\}$ and one additional time on elements in $\{0, 1,\ldots, \delta_s-1\}$. Hence the multiset
 \[ \{tu\pmod{p^s}\mid t\in D_{i+b}^{(p^{j})},\ 0\leq i\leq \delta_j-1\} \]
 passes $\frac{p^l-1}{2}$ times through $\Z_{p^s}^*$, and  one additional time over the union of $ D_{i+k+b}^{(p^s)}$ for $0\leq i\leq \delta_s-1$. Since $\beta^{p^{m-s}}$ is a primitive $p^s$-th root of unity, by \eqref{eq:6}, we have
 \[
  \bH_{b}^{(p^{l+1})}(\beta^{a})
   =\frac{p^l-1}{2}  \sum_{a\in \Z_{p^s}^*} \beta^{p^{m-s}a}+\bH_{b+k}^{(p^s)}(\beta),
 \]
which is $\frac{p^l-1}{2} +\bH_{b+k}^{(p)}(\beta)$ if $s=1$ and $\bH_{b+k}^{(p^s)}(\beta)$ if $s\geq 2$ by Lemmas~\ref{lemma:4}.
\end{proof}

From Lemma~\ref{lemma:2}, we  have the following easy consequence.

\begin{proposition} \label{prop:1} For $a=0$, one has $ s(1)=\widetilde{s}(1)=1$. For $a\in p^{l}D_k^{(p^{m-l})}$, $0\leq l<m$, one has
 \begin{align}
 & s(\beta^{a}) =
 {1+\sum_{s=1}^{m-l}\bH_{b+k}^{(p^s)}(\beta)+\sum_{s=1}^{m-l}
 \bH_{b+k}^{(p^s)}(\beta)^2},\\
 & \widetilde{s}(\beta^{a}) =
 {\sum_{s=1}^{m-l}{\bH}_{b+k}^{(p^s)}(\beta)+
 \sum_{s=1}^{m-l}\bH_{b+k}^{(p^s)}(\beta)^2}.
\end{align}
\end{proposition}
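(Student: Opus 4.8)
The plan is to substitute the Frobenius relations already recorded before the statement and reduce everything to the three-case evaluation of $\bH_{b}^{(p^{j})}(\beta^a)$ furnished by Lemma~\ref{lemma:2}. Using $H_{b}^{(p^{j})}(\beta^{a})=(\bH_{b}^{(p^{j})}(\beta^a))^2$ and $H_{b}^{(2p^{j})}(\beta^{a})=\bH_{b}^{(p^{j})}(\beta^a)$, I would first rewrite $s(\beta^a)=1+\sum_{j=1}^m\bigl((\bH_b^{(p^j)}(\beta^a))^2+\bH_b^{(p^j)}(\beta^a)\bigr)$ and, since the squared summand of $\widetilde s$ carries the shifted index, $\widetilde s(\beta^a)=1+\sum_{j=1}^m\bigl((\bH_{b+\delta_j}^{(p^j)}(\beta^a))^2+\bH_b^{(p^j)}(\beta^a)\bigr)$. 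The case $a=0$ is immediate: $s(1)=|C_1|\bmod 2$ and $\widetilde s(1)=|\widetilde C_1|\bmod 2$, and a direct count gives $|C_1|=|\widetilde C_1|=1+\sum_j 2e\delta_j$, which is odd, so both evaluate to $1$.

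For $a\in p^l D_k^{(p^{m-l})}$ I would split the sum over $j$ into the ranges $j\le l$, $j=l+1$, $j>l+1$ dictated by Lemma~\ref{lemma:2}, exploiting throughout that we work in characteristic $2$, where any $c\in\{0,1\}$ satisfies $c^2=c$, so $c+c^2=0$. For $s(\beta^a)$: when $j\le l$, Lemma~\ref{lemma:2}(1) gives the constant $\tfrac{p^{j-1}(p-1)}{2}$, which reduces mod $2$ to an element of $\{0,1\}$, so its contribution $c+c^2$ vanishes; when $j=l+1$, Lemma~\ref{lemma:2}(2) yields $\tfrac{p^l-1}{2}+\bH_{b+k}^{(p)}(\beta)$, and squaring plus adding kills the constant part and leaves $\bH_{b+k}^{(p)}(\beta)+\bH_{b+k}^{(p)}(\beta)^2$; when $j>l+1$, Lemma~\ref{lemma:2}(3) gives $\bH_{b+k}^{(p^{j-l})}(\beta)$ with no constant. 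Re-indexing by $s=j-l$ then assembles exactly $\sum_{s=1}^{m-l}\bigl(\bH_{b+k}^{(p^s)}(\beta)+\bH_{b+k}^{(p^s)}(\beta)^2\bigr)$.

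The slightly more delicate computation is $\widetilde s(\beta^a)$, where the squared summand carries the index $b+\delta_j$. I would first observe that, since $p^l$ is odd, $\delta_{l+s}=p^l\delta_s\equiv\delta_s\pmod{d_s}$ for $s\ge 2$ and $\delta_{l+1}\equiv\tfrac f2\pmod{d_1=f}$. Lemma~\ref{lemma:4} then converts the shifted terms: for $s\ge 2$, $\bH_{b+k+\delta_s}^{(p^s)}(\beta)=\bH_{b+k}^{(p^s)}(\beta)$, whereas for $s=1$, $\bH_{b+k+f/2}^{(p)}(\beta)=1+\bH_{b+k}^{(p)}(\beta)$. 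The ranges $j\le l$ again contribute $0$, and tracking the constants through $j=l+1$ (here the base is $\tfrac{p^l-1}{2}+1+\bH_{b+k}^{(p)}(\beta)$, whose square adds the unsquared term $\tfrac{p^l-1}{2}+\bH_{b+k}^{(p)}(\beta)$ to give $1+\bH_{b+k}^{(p)}(\beta)+\bH_{b+k}^{(p)}(\beta)^2$) produces an extra $1$ that cancels the leading $1$ in the formula. The surviving terms telescope to $\sum_{s=1}^{m-l}\bigl(\bH_{b+k}^{(p^s)}(\beta)+\bH_{b+k}^{(p^s)}(\beta)^2\bigr)$, as claimed.

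The main obstacle is precisely this constant bookkeeping in the $\widetilde s$ computation: one must verify that the integer constants $\tfrac{p^{j-1}(p-1)}{2}$ and $\tfrac{p^l-1}{2}$, the parity of $p^l$ (used to reduce $\delta_{l+s}$ modulo $d_s$), and the $+1$ produced by Lemma~\ref{lemma:4} all combine so that every stray constant cancels in characteristic $2$ and the leading $1$ disappears. Everything else amounts to routine re-indexing and the identity $c+c^2=0$ for $c\in\{0,1\}$.
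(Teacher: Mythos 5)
Your proof is correct and follows exactly the route the paper intends: the paper states Proposition~\ref{prop:1} as an ``easy consequence'' of Lemma~\ref{lemma:2} without writing out details, and your argument supplies precisely those details --- the Frobenius relations, the three-case split from Lemma~\ref{lemma:2}, and the use of Lemma~\ref{lemma:4} (together with the parity of $p^l$, giving $\delta_{l+s}\equiv\delta_s\pmod{d_s}$) to absorb the shifted indices $b+\delta_j$ and cancel the leading $1$ in the $\widetilde{s}$ formula. The constant bookkeeping you flag as the delicate point checks out in every case.
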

Note that $\bH_v^{(p^s)}=\bH_{m,v}^{(p^s)}$ depends on $m$, however, for $s\leq l\leq m$, $\bH_{m,v}^{(p^s)}(\beta)=\bH_{l,v}^{(p^s)}(\beta_l)$. Thus for the sum  $\sum\limits_{s=1}^{m-l} \bH_v^{(p^s)}(\beta)$ appeared in Proposition~\ref{prop:1}, we may assume $m-l$ is just $m$. For $v\in \Z$,
write
 \[ A_{l,v}(x)=\sum\limits_{s=1}^{1} \bH_v^{(p^s)}(x)\quad
 \text{and}\quad A_{l,v}=A_{1,v}(\beta). \]
We drop the subscript $l$ if $l=m$. By Lemma~\ref{lemma:4}, we have
 \begin{equation} A_{l,v}+A_{l,v+\delta_l}=1.
 \end{equation}

\begin{proposition} \label{prop:2} Suppose $2\in D_h^{(p^m)}$.

$(1)$ If $2\in D_0^{(p^m)}$, then $A_{v}\in \F_2$ for $m\geq 1$. If $2\notin D_0^{(p)}$, then $A_v\notin \F_2$ for $m\geq 1$.

$(2)$ If $2\in D_0^{(p)}$ but $2\notin D_0^{(p^2)}$, then $A_{1,v}\in \F_2$ and $A_{v}\notin \F_4$ for  $m\geq 2$.

$(3)$ If $2\in D_{\delta_1}^{(p)}$ but $2\notin D_{\delta_2}^{(p^2)}$(note that $\delta_1=\frac{f}{2}$ and $\delta_2=p\delta_1$), then $A_{1,v}\in \F_4- \F_2$ and $A_{v}\notin \F_4$ for $m\geq 2$.

$(4)$ If $\delta_1\nmid h$, then $A_{v}\notin \F_4$ for any $v$.
\end{proposition}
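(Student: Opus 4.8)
The plan is to turn each membership question into a shift-equality among the sums $A_v$, to collapse the problem onto a single cyclotomic level by a cancellation principle, and to locate the level at which the hypothesis actually acts. First I would record three identities. Write $n_l=\ord_{p^l}(2)$, so that $\F_2[\beta_l]=\GF(2^{n_l})$ with $n_1\mid\cdots\mid n_m=n$. Since multiplication by $2$ sends $D_i^{(p^l)}$ to $D_{i+h}^{(p^l)}$ (recall $2\in D_h^{(p^l)}$, and the class depends only on the index modulo $d_l$), the Frobenius $x\mapsto x^2$ acts levelwise by $\bH_v^{(p^l)}(\beta)^2=\bH_{v+h}^{(p^l)}(\beta)$; summing over $l$,
\[ A_v^2=A_{v+h},\qquad A_{1,v}^2=A_{1,v+h}, \]
so that $A_v\in\F_{2^k}\iff A_{v+kh}=A_v$, and likewise for $A_{1,v}$. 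Next, the identity $\bH_{m,v}^{(p^s)}(\beta)=\bH_{l,v}^{(p^s)}(\beta_l)$ noted after Proposition~\ref{prop:1} gives the recursion $A_{l,v}=A_{l-1,v}+G_{l,v}$, where $G_{l,v}:=\bH_v^{(p^l)}(\beta_l)\in\GF(2^{n_l})$; hence $A_v=\sum_{l=1}^m G_{l,v}$ with $G_{l,v}$ a sum of powers $\beta^t$ whose exponents have $p$-adic valuation exactly $m-l$. Finally Lemma~\ref{lemma:4} yields the complement relation $A_{1,v+\delta_1}=A_{1,v}+1$.

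These identities already settle the level-$1$ and the $\F_2$ assertions. In $(1)$, $2\in D_0^{(p^m)}$ means $h\equiv0\pmod{d_m}$, so $A_v^2=A_v$ and $A_v\in\F_2$. In $(2)$, $2\in D_0^{(p)}$ means $h\equiv0\pmod f$, whence $A_{1,v}^2=A_{1,v}$ and $A_{1,v}\in\F_2$. In $(3)$, $2\in D_{\delta_1}^{(p)}$ means $h\equiv\delta_1\pmod f$, so $A_{1,v}^2=A_{1,v}+1\ne A_{1,v}$ while $A_{1,v}^4=A_{1,v}$, giving $A_{1,v}\in\F_4\setminus\F_2$.

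The remaining statements are the non-memberships $A_v\notin\F_2$ in $(1)$ and $A_v\notin\F_4$ in $(2)$--$(4)$, that is $A_{v+h}\ne A_v$ resp. $A_{v+2h}\ne A_v$. Here I would first prove a \emph{purity} principle: if $A_{v+c}=A_v$ then $G_{l,v+c}=G_{l,v}$ for every $l$. Indeed $0=A_{v+c}+A_v=\sum_l(G_{l,v+c}+G_{l,v})$ is a sum over disjoint valuation strata whose $l$-th term $\tau_l$ lies in $\GF(2^{n_l})$; choosing the largest $L\ge2$ with $\tau_L\ne0$ forces $\tau_L\in\GF(2^{n_{L-1}})$, i.e. the valuation-$(m-L)$ sum $\tau_L=\sum_r c_r\beta_L^r$ is fixed by $u:=2^{n_{L-1}}$. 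But $u\equiv1\pmod{p^{L-1}}$ has order $p$ in $\Z_{p^L}^*$, and each of its orbits contributes $\beta_L^r\sum_{j=0}^{p-1}\beta_1^{c_0 r j}=0$ (a nontrivial geometric sum of $p$-th roots), so an $\langle u\rangle$-invariant $\tau_L$ must vanish --- a contradiction. Hence all $\tau_l=0$. The hypotheses $2\notin D_0^{(p^2)}$ and $2\notin D_{\delta_2}^{(p^2)}$ are precisely what force $n_1<n_2$, so that this orbit argument has content at the relevant level.

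By purity, $A_{v+h}=A_v$ (resp. $A_{v+2h}=A_v$) would force $G_{l,v+h}=G_{l,v}$ (resp. with $2h$) at \emph{every} level, in particular at the single level where the hypothesis bites: level $1$ for $(1)$ and $(4)$, where $2\notin D_0^{(p)}$ gives $h\not\equiv0\pmod f$ and $\delta_1\nmid h$ gives $2h\not\equiv0\pmod f$; and level $2$ for $(2)$,$(3)$, where the hypotheses give a nonzero \emph{pure} level-$2$ shift (one has $2h\equiv0\pmod f$ but $2h\not\equiv0\pmod{d_2}$). Everything therefore reduces to the main obstacle, namely a distinctness statement for the pure Gaussian-period sums: $\bH_{v+c}^{(p^l)}(\beta_l)\ne\bH_v^{(p^l)}(\beta_l)$ for the nonzero shifts $c$ just described, equivalently $\sum_{r\in E}\beta_l^r\ne0$ for the corresponding union $E$ of cyclotomic classes of order $d_l$. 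This is the genuinely arithmetic point --- it asks when a half-sum of order-$d_l$ Gaussian periods drops in $\F_2$-rank --- and I would establish it by following the Gaussian-period analysis of Edemskiy et al.~\cite{23}, the forced inequality $n_{l-1}<n_l$ supplied by the hypotheses being exactly what rules out such a collapse.
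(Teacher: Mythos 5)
Your Frobenius identity $A_v^2=A_{v+h}$, your proofs of the membership claims ($A_v\in\F_2$ in (1), $A_{1,v}\in\F_2$ in (2), $A_{1,v}\in\F_4\setminus\F_2$ in (3)), and your purity principle \emph{for parts (2) and (3)} are essentially sound: there the hypotheses $2\notin D_0^{(p^2)}$, resp.\ $2\notin D_{\delta_2}^{(p^2)}$, force $n_2=pn_1$ and hence $n_l=p^{l-1}n_1$ for every $l$, so each extension $\GF(2^{n_L})/\GF(2^{n_{L-1}})$ has degree exactly $p$ and the orbit computation works at every level. (One repair to the write-up: invariance of the \emph{element} $\tau_L$ does not mean its support set is a union of $u$-orbits, since the $\beta_L^r$ are $\F_2$-dependent; the correct bridge is the trace, i.e.\ $\sigma(\tau_L)=\tau_L$ gives $\tau_L=p\tau_L=\mathrm{Tr}(\tau_L)=\sum_r c_r\,\mathrm{Tr}(\beta_L^r)=0$, your orbit sums being exactly the statement $\mathrm{Tr}(\beta_L^r)=0$.) With that fixed, your reduction of (2),(3) to a pure level-$2$ distinctness statement, deferred to the analysis of \cite{23}, is structurally parallel to the paper, which reduces to the impossibility of $\tA_v=\tA_{v+n\delta_1}$ and likewise defers to \cite[Proposition 2]{23}; neither text proves that arithmetic crux from scratch, so on (2),(3) you are at the same level of rigor as the paper.

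The genuine gap is in parts (1) and (4). You route these through purity as well, but their hypotheses ($2\notin D_0^{(p)}$, resp.\ $\delta_1\nmid h$) are conditions modulo $p$ only: nothing forces $n_1<n_2$, nor $n_{L-1}<n_L$ at any level $L\geq 2$. For a Wieferich-type prime with $n_2=n_1$, the element $u=2^{n_1}$ is $\equiv 1\pmod{p^2}$, the $u$-orbits on $\Z_{p^2}^*$ are singletons, and your trace/orbit step at level $2$ has no content, so the purity principle is not available — this is precisely the obstruction the paper itself flags in its Remark on Wieferich primes. Your own caveat that the hypotheses "force $n_1<n_2$" is true only for (2),(3); in (1),(4) there is no such hypothesis, so your proofs of $A_v\notin\F_2$ in (1) and $A_v\notin\F_4$ in (4) do not go through. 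The fix is that purity is unnecessary there: Lemma~\ref{lemma:4} gives the global relation $A_{v+\delta_m}=A_v+1$, and since $f$ is a $2$-power, $f\nmid h$ implies $\gcd(h,d_m)\mid\delta_m$, so some positive multiple $hx_1\equiv\delta_m\pmod{d_m}$; if $A_v\in\F_2$ then $A_{v+nh}=A_v$ for all $n$, a contradiction. Similarly $\delta_1\nmid h$ forces every solution of $hx\equiv\delta_1\pmod f$ to be even, so $\delta_m$ is an \emph{even} multiple of $h$ modulo $d_m$, while $A_v\in\F_4$ makes $A_{v+nh}=A_v$ for all even $n$, again a contradiction. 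This is the paper's argument; it needs no field-degree information and is valid for all $m$ and all (including Wieferich) primes.
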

\begin{proof} If $2\in D_h^{(p^m)}$, then $2\in D_h^{(p^s)}$ for all $s\leq m$. For any $i$, we have $\{2t\mid t\in D_i^{(p^s)}\}=D_{i+h}^{(p^s)}$, hence $\bH_{v}^{(p)}(\beta)^2=\bH_{v}^{(p)}(\beta^2)=\bH_{v+h}^{(p)}(\beta)$
and
 \begin{equation}\label{eq:key} A_{v}^2=A_{v+h}.
 \end{equation}

(1) If $2\in D_{0}^{(p^m)}$, then \eqref{eq:key} implies that $A_{v}^2=A_{v}$, hence $A_v\in \F_2$.

If $2\notin D_{0}^{(p)}$, then $f\nmid h$, there exists $x_{1}>0$ such that $hx_1\equiv \delta_m\pmod{d_m}$. By Lemma~\ref{lemma:4}, we have
\begin{equation*}
  A_{v+hx_1}=A_{v+\delta_m}=A_{v}+1.
\end{equation*}
On the other hand, if $A_{v}\in\F_2$, by \eqref{eq:key}, for all $n\in \Z$, we have
 \[ A_{v}=A_{v\pm h}=\cdots= A_{v+nh} \in \F_2. \]
This is a contradiction.

(2) That $A_{1,v}\in \F_2$ follows from (1). For $m\geq 2$, the assumption means $\gcd(h, d_m)=d_1=f$ and hence $\gcd(h, \delta_m)=\delta_1$. Let $\tA_{w}(x)=A_{w}(x)-A_{1,w}(x)$ and $\tA_{w}=A_{w}-A_{1,w}$. If $A_v\in \F_2$, then for $n\in\Z$,
  \[ \tA_{v}=\tA_{v\pm h}=\cdots= \tA_{v+nh} \in \F_2, \]
and by Lemma~\ref{lemma:4},
 \[ \tA_{v}=\tA_{v\pm \delta_m}=\cdots= \tA_{v+n\delta_m} . \]
Hence $\tA_v=\tA_{v+n_1 h+n_2\delta_m}$ for any $n_1,\ n_2\in \Z$, and
$\tA_{v}=\tA_{v+n\delta_1}$ for $n\in\Z$.

Note that the condition $2\in D_0^{(p^j)}$ is nothing but $2^e\equiv 1\pmod{p^j}$. If we let the order of $2$ modulo $p$ be $\tau$, then our assumption means that $\tau\mid e$ and the order of $2$ modulo $p^j$ is $\tau p^{j-1}$ for $j\geq 2$ by the argument in \cite[Lemma 2]{23}, and hence $[\F_2(\beta):\F_2(\beta_{m-1})]=p$ by \cite[Lemma 3]{23}. Then essentially the same argument in \cite[Proposition 2]{23} shows that
$\tA_{v}=\tA_{v+n\delta_1}$ is not possible. Hence $A_v\notin \F_2$.

If $A_v\in \F_4-\F_2$, then $\tA_v\in \F_4-\F_2$, we have $\tA_{v+h}=\tA^2_v=\tA_v+1$ and $\tA_{v+2h}=\tA_v$; and $\tA^2_{v-h}=\tA_v=(\tA_v+1)^2$, $\tA_{v-h}=\tA_v+1$ and $\tA_{v-2h}=\tA_v$. Hence we still have $\tA_{v}=\tA_{v+n\delta_1}$ for $n\in \Z$. Now apply the argument in the previous paragraph again to get a contradiction.

(3) Since $2\in D_{\delta_1}^{(p)}$. Hence
 \[ A^2_{1,v}=A_{1,v+\delta_1}=A_{1,v}+1 \]
and $A_{1,v}\in \F_4$. For $m>1$, again let $\tA_v=A_v-A_{1,v}$. Then $\tA_v^2=\tA_{v+h}$. If $\tA_v\in \F_2$, we have $\tA_{v+h}=\tA_v$,  If $\tA_v\in \F_4-\F_2$, we have $\tA_{v\pm 2h}=\tA_v$. Since by assumption, $\gcd(h,\delta_m)=\gcd(2h,\delta_m)=\delta_1$, we  get $\tA_{v}=\tA_{v+n\delta_1}$. Now $2\in D_{\delta_j}^{(p^j)}$ is nothing but the condition $2^e\equiv -1\bmod{p^j}$. Our assumption still leads to the condition $[\F_2(\beta):\F_2(\beta_{m-1})]=2$, and we can still follow the argument in \cite[Proposition 2]{23} to get a contradiction.

(4) If $\frac{f}{2}\nmid h$, then in particular $\frac{f}{2}=2^{r-1}$ is even and there exists an even integer $x_{1}>0$ such that $hx_1\equiv \frac{f}{2}\pmod{f}$. If $A_{v}\in\F_4$, by the proof of (1), we may assume $A_{v}=\epsilon_0\notin \F_2$, thus $\epsilon_0^2+\epsilon_0+1=0$. By
Lemma~\ref{lemma:4}, we have
\begin{equation*}
  \epsilon_{p^{l-1}x_1}:=A_{v+p^{m-1}x_1}=A_{v+\delta_m}
  =A_{v}+1=\epsilon_0+1.
\end{equation*}
Then  by \eqref{eq:key}, we have
$\epsilon_1=A_{v+h}=\epsilon_0^2=\epsilon_0+1$, $\epsilon_2=A_{v+2h}=\epsilon_1^2=\epsilon_0$, hence $\epsilon_0=\epsilon_2=\cdots=\epsilon_{p^{m-1}x_1}$. This is a contradiction.
\end{proof}
\begin{remark} It is well known that
 \[ 2^e\equiv 1\pmod{p^j}\Longleftrightarrow 2\in D_0^{(p^j)},\ \text{and}\ 2^e\equiv -1\pmod{p^j}\Longleftrightarrow 2\in D_{\delta_j}^{(p^j)}. \]

(1) If $p$ is a non-Wieferich prime, i.e., $2^{p-1}\not\equiv 1\bmod{p^2}$, then it is always true that $2\notin D^{(p^2)}_h$ for $h=0$ or $\delta_2$. So the proposition covers all possible $h$ that  $2\in D_h^{(p^m)}$. Consequently, the conjecture by Xiao et al. in \cite{18} can be proved.

(2) Suppose $p$ is a Wieferich prime.  Suppose $2\in D_0^{(p^a)}$ but $2\notin D_0^{(p^a+1)}$ for some $1\leq a<m$, then $A_{s,v}\in \F_2$ for $s\leq a$. We also have $[\F_2(\beta_{j+1}):\F_2(\beta_j)]=p$ for $j\geq p$. If $A_v\in \F_4$, write $\tA_v=A_v-A_{a,v}$, then we can get $\tA_v=\tA_{v+n\delta_a}$ for $n\in \Z$. It would be great if we can get a contradiction, however, the argument in \cite{23} only works for the case $a=1$.

Similarly,  suppose $2\in D_{\delta_a}^{(p^a)}$ but $2\notin D_{\delta_{a+1}}^{(p^a+1)}$ for some $1\leq a<m$, then $A_{s,v}\in \F_4$ for $s\leq a$. We also have $[\F_2(\beta_{j+1}):\F_2(\beta_j)]=p$ for $j\geq p$. If $A_v\in \F_4-\F_2$, write $\tA_v=A_v-A_{a,v}$, then we also get $\tA_v=\tA_{v+n\delta_a}$ for $n\in \Z$.
\end{remark}

We are now ready to prove our main results by applying Propositions~\ref{prop:1} and \ref{prop:2}.

\begin{proof}[Proof of Theorem~\ref{theorem1}] If $2^e\equiv 1\pmod{p}$ but $2^e\not\equiv 1\pmod{p^2}$, then $A_{1,v}\in \F_2$ and $A_{l,v}\notin \F_4$ for $l\geq 2$, in both cases,  $s(\beta^a)=1\neq 0$. If $ 2^e\not\equiv \pm1\pmod{p}$, then $\delta_1\nmid h$ and $A_{l,v}\notin \F_4$, and hence $s(\beta^a)\neq 0$. Therefore  $\LC(\bs^\infty)=2p^m$.

If $2^e\equiv -1\mod{p}$ but  $2^e\not\equiv -1\pmod{p^2}$, then $A_{1,v}\in \F_4-\F_2$ and $A_{l,v}\notin \F_4$ for $l\geq 2$. Hence
$s(\beta^a)=0$ for $a\in p^{m-1}\Z_p^*$ and $s(\beta^a)\neq 0$ for
all other $a$. Hence $2p^m-2(p-1)\leq \LC(\bs^\infty)\leq 2p^m-(p-1)$.
\end{proof}

\begin{proof}[Proof of Theorem~\ref{theorem2}]  If $2^e\not\equiv 1\pmod{p}$, then $2\notin D_0^{(p)}$. Hence  $A_{l,v}\notin \F_2$ for all $l$ and $\widetilde{s}(\beta^a)\neq 0$. Therefore  $\LC(\widetilde{\bs}^\infty)=2p^m$.

If  $2^e\equiv 1\pmod{p}$ but $2^e\not\equiv 1\pmod{p^{2}}$, then only $A_{1,v}\in \F_2$ and $\widetilde{s}(\beta^a)= 0$ for $a\in p^{m-1} \Z_{p}^*$. For all other $a$, $\widetilde{s}(\beta^a)\neq 0$. Hence $2p^m-2(p-1)\leq \LC(\widetilde{\bs}^\infty)\leq 2p^m-(p-1)$.
\end{proof}

\section{Numerical Evidence}
By using Magma, we compute the following examples to check our results.

\begin{example} Let $p=7$, $m=2$ and $g=3$. Take $f=2$ and $e=3$, then $2^3\equiv1\pmod{p}$ and $2^3\not\equiv1\pmod{p^2}$. For $b=0$,
 \[ \begin{split} \bs^\infty=&111101110110011100100000011111101
 0001101010101010\\ &01010101010100111
 01000000111111011000110010001000.
 \end{split} \]
Then $\LC(\bs^\infty)=98=2p^m$, consistent with Theorem~\ref{theorem1}(1). For $b=0$,
 \[ \begin{split} \bws^\infty=&110111011100110110001010110101000
 0100111111111111\\ & 00000000000001101
 11101010010101110010011000100010.
 \end{split} \]
Then $\LC(\bws^\infty)=89=2p^m-(p-1)-e$, consistent with Theorem~\ref{theorem2}(2).
 \end{example}

\begin{example} Let $p=5$, $m=2$ and $g=3$. Then  $f$ can be taken either $2$ or $4$.

(i) If one takes $f=2$, then $e=2$, $2^2\equiv-1\pmod{p}$  and $2^2\not\equiv-1\pmod{p^2}$. For $b=0$,
 \[  \bs^\infty=1111111001101000001100010
 0010001100000101100111111.
  \]
Then $\LC(\bs^\infty)=46=2p^m-(p-1)$, consistent with Theorem~\ref{theorem1}(2). For $b=0$,
 \[  \bws^\infty=1101010011000010110110111
 0111011000010000110010101.
\]
Then $\LC(\bws^\infty)=50=2p^m$, consistent with Theorem~\ref{theorem2}(1).

(ii) If one takes $f=4$, then $e=1$, $2\not\equiv1\pmod{p}$. For $b=0$,
 \[ \begin{split} & \bs^\infty=1111111011111001101000101
 0010111010011000001000000, \\
 &\widetilde{\bs}^\infty=1101010001010011000010000
 0111101111001101011101010.
 \end{split} \]
Then $\LC(\bs^\infty)=\LC(\bws^\infty)=50=2p^m$, consistent with Theorem~\ref{theorem1}(1) and Theorem~\ref{theorem2}(1) respectively.\end{example}

\begin{example} Let $p=31$, $m=1$, $g=3$ and $e=15$. Then $2^{15}\equiv1\pmod{31}$ and $2^{15}\not\equiv1\pmod{31^2}$. For $b=0$,
 \[ \begin{split}
 & \bs^\infty=1110110111100010101110000100100
 0110110111100010101110000100100,\\ &\widetilde{\bs}^\infty=1100011101001000000100101110001
 0011100010110111111011010001110.\\
 \end{split} \]
Then $\LC(\bs^\infty)=62=2p$ and $\LC(\widetilde{\bs}^\infty)=17=2p-(p-1)-e$, consistent with Theorem~\ref{theorem1}(1) and Theorem~\ref{theorem2}(2). \end{example}

Because of the above examples, we form our conjecture and try more examples in the following two tables:

\begin{table}[!htbp]
\caption{$\LC(\bs^\infty)$ for $2^e\equiv -1\bmod{p}$ but $\not\equiv -1\pmod{p^2}$}
 \begin{tabular}{|c|c|c|c|c|c|c|}
     \hline
       $p$ & $m$ & $e$ & $g$ & $b$ & $\LC(\bs^\infty)$ & $2p^m-(p-1)$\\
       \hline
       \multirow{3}{*}{5} & 2 & \multirow{3}{*}{2} &  \multirow{3}{*}{3}  & \multirow{3}{*}{0, 1, 3}
        & 46 & 46 \\ \cline{2-2} \cline{6-7}
       & 3 & & &  & 246 & 246 \\ \cline{2-2} \cline{6-7} & 4 & & &  & 1246 & 1246\\
       \hline
       11 & 2 & 5 & 7 & 2, 19 & 232 & 232 \\ \cline{1-7}
       \multirow{4}{*}{13} & \multirow{2}{*}{2} & \multirow{4}{*}{6} & 7 & {6, 11} &  \multirow{2}{*}{326} & \multirow{2}{*}{326}\\ \cline{4-5} & & & 11 & 5, 12 & & \\ \cline{2-2} \cline{4-7} & \multirow{2}{*}{3} & & 7 & \multirow{2}{*}{5, 12} & \multirow{2}{*}{4382} & \multirow{2}{*}{4382}\\ \cline{4-4} & & & 11 & & &\\ \hline
       \multirow{4}{*}{17} & \multirow{2}{*}{1} & \multirow{4}{*}{4} & 3 & \multirow{2}{*}{0, 3} &  \multirow{2}{*}{18} & \multirow{2}{*}{18}\\ \cline{4-4} & & & 5 &  & & \\ \cline{2-2} \cline{4-7} & \multirow{2}{*}{2} & & 3 & {0, 2} & \multirow{2}{*}{562} & \multirow{2}{*}{562}\\ \cline{4-5} & & & 5 & 0, 7 & &\\ \hline
        \multirow{2}{*}{19} & \multirow{2}{*}{2} & \multirow{2}{*}{9} & 3 & {1, 6} &  \multirow{2}{*}{704} & \multirow{2}{*}{704}\\ \cline{4-5} & & & 13 & 3, 22 & & \\ \hline
   \end{tabular}
\end{table}

\begin{table}[!htbp]
\caption{$\LC(\widetilde{\bs}^\infty)$ for $2^e\equiv 1\bmod{p}$ but $\not\equiv 1\pmod{p^2}$}
 \begin{tabular}{|c|c|c|c|c|c|c|}
     \hline
       $p$ & $m$ & $e$ & $g$ & $b$ & $\LC(\widetilde{\bs}^\infty)$ & $2p^m-(p-1)-e$\\
       \hline
        \multirow{4}{*}{7} & \multirow{2}{*}{2} & \multirow{4}{*}{3} & 3 & \multirow{2}{*}{0, 3} &  \multirow{2}{*}{89} & \multirow{2}{*}{89}\\ \cline{4-4} & & & 5 &  & & \\ \cline{2-2} \cline{4-7} & \multirow{2}{*}{3} & & 3 & \multirow{2}{*}{0, 1} & \multirow{2}{*}{677} & \multirow{2}{*}{677}\\ \cline{4-4} & & & 5 & & &\\
       \hline
       \multirow{2}{*}{17} & {1} & \multirow{2}{*}{8} & 3 & {0, 3} &  {10} & {10}\\  \cline{4-4}   \cline{2-2} \cline{4-7} & {2} & & 5 & {0, 3} & {554} & {554}\\  \hline
       \multirow{2}{*}{23} & \multirow{2}{*}{2} & \multirow{2}{*}{11} & 5 & \multirow{2}{*}{1, 5} &  \multirow{2}{*}{1025} & \multirow{2}{*}{1025}\\ \cline{4-4} & & & 7 &  & & \\ \hline
       \multirow{4}{*}{31} & \multirow{2}{*}{1} & \multirow{4}{*}{15} & 3 & \multirow{2}{*}{1, 6} &  \multirow{2}{*}{17} & \multirow{2}{*}{17}\\ \cline{4-4} & & & 11 &  & & \\ \cline{2-2} \cline{4-7} & \multirow{2}{*}{2} & & 3 & \multirow{2}{*}{2, 5} & \multirow{2}{*}{1877} & \multirow{2}{*}{1877}\\ \cline{4-4} & & & 11 & & &\\ \hline
   \end{tabular}
\end{table}

\section{Conclusion}
In this paper, we introduced two generalized cyclotomic binary sequences of period $2p^{m}$, which include the sequences in \cite {8, 12} as special cases. We computed their linear complexity in most cases (all cases for $p$ a non-Wieferich odd prime) and showed each of our sequences is of high linear complexity if $m\geq 2$.

\end{document}